\newtheorem{theorem}{Theorem}
\normalfont\fontsize{14}{15}\bfseries}{\thesection}{1em}{}
\title{Approximating pressure-driven Stokes flow using the principle of minimal excess dissipation}
\author{Tachin Ruangkriengsin\aff{1}
  \corresp{\email{bankcrub@g.ucla.edu}}
 \and Marcus Roper\aff{1,2}}
\affiliation{\aff{1}Department of Mathematics, University of California Los Angeles, Los Angeles, California 90095, USA
\aff{2}Dept. of Computational Medicine, University of California Los Angeles, Los Angeles, California 90095, USA}
\begin{document}

\maketitle


\begin{abstract}
Stokes' equations model microscale fluid flows including the flows of nanoliter-sized fluid samples in lab-on-a-chip systems. Helmholtz's dissipation theorem guarantees that the solution of Stokes' equations in a given domain minimizes viscous dissipation among all incompressible vector fields that are compatible with the velocities imposed at the domain boundaries. Helmholtz's dissipation theorem directly guarantees the uniqueness of solutions of Stokes flow, and provides a practical method for calculating approximate solutions for flow around moving bodies. However, although generalization of the principle to domains with mixtures of velocity and stress boundary conditions is relatively straightforward (Keller et al., 1967), it appears to be little known. Here we show that the principle of minimal excess dissipation can be used to derive accurate analytical approximations for the flows in microchannels with different cross-section shapes including when the channel walls are engineered to have different distributions of slip boundary conditions. In addition to providing a simple, rapid, method for approximating the conductances of micro-channels, analysis of excess dissipation allows for a comparison principle that can be used, for example, to show that the conductance of a channel is always increased by adding additional slip boundary conditions.
\end{abstract}
    
    
\maketitle
    
\section{Introduction}\label{Intro}
In microchannel flows, small length scales and moderate flow speeds mean that viscous stresses dominate over inertial forces \citep{squires2005microfluidics,tabeling2005introduction}. In this setting, the flows, $\mathbf{u}$ and pressure $p$, can be modeled using the Stokes' equations; $\mu \nabla^2 \mathbf{u} +\mathbf{F} - \nabla p =0$ and $\nabla \cdot \mathbf{u}=0$, where $\mathbf{F}$ is the body force, and $\mu$ is the viscosity of the fluid. Although the linearity and lack of time dependence in these equations allow for drastically simpler methods of solution than the fully nonlinear Navier-Stokes equations, there are relatively few available exact solutions of the equations. Here, we consider pressure-driven flows in long channels with constant cross-sections. In addition to being common in lab on a chip device, such microchannel flows appear in models of air flow in lung alveoli \citep{bates2009lung} and insect feeding \citep{Insect_feeding}. Analytic solutions are possible when the channel walls align with the level curves of a coordinate system in which Laplace's equation is separable \citep{moon2012field,bahrami2005pressure,bazant2016exact,bruus2008theoretical}, as well as in a few other geometries such as equilateral triangular channels \citep{triangular}. However, even when analytical expressions for the flow can be derived, they often take the form of infinite series expansions, providing only indirect access into how channel conductance depends upon the parameters that control the channel shape. \citet{bahrami2005pressure} present closed form approximate solutions for the flow-pressure drop relationship in channels of constant cross-section, based on physical reasoning into the roles played by channel perimeter and moment of inertia, and show empirically that these approximate expressions closely model the exact conductance laws for rectangular, trapezoidal, semicircular and sector shaped channels. The agreement presented there is very good (within 10\% for all the channel shapes studied). However, despite its empirical success, no rigorous error bound is proven for the method, nor is there a path to make approximations of arbitrarily high accuracy or to generalizing the method to calculate conductance laws when the flow in the channel is not translationally invariant.

Advances in surface engineering now allow for the creation of channels in which the walls of the channel contain a mixture of no-slip and slip boundary conditions \citep{Lauga2007,neto2005boundary,tretheway2004generating}. In particular, nearly perfect slip can be produced on a boundary using a combination of surface chemistry and structure to trap a layer of gas bubbles \citep{NanobubbleIshida,tyrrell2001images,Tyrrell2002AtomicFM}. Inclusion of patterns of slip and no-slip on channel walls greatly complicates the mathematical challenge of solving Stokes equations. Previous analyses have exposed conductance relations for circular channels with parallel bands of slip that are oriented either longitudinally to transversely to the flow \citep{lauga2003effective} and for plane boundaries over which slip islands are randomly distributed \citep{sbragaglia_prosperetti_2007}. Approximate conductance laws are obtained from such models as asymptotic limits of exact solutions, and give insights into the quantitative dependence of conductance upon the sizes of slip patches and upon their distribution. 

Since obtaining exact solutions is frequently challenging, we are motivated to obtain a general framework for approximating pressure-driven Stokes flow that can be applied to microchannels with different cross-sections and boundary conditions. Our starting place is the well-known \textit{Helmholtz minimum dissipation theorem} \citep{helmholtz,happel2012low}, which states that the steady Stokes flow of an incompressible fluid has the smallest rate of dissipation among all incompressible velocity fields with the same velocity on the domain boundaries. The proof of this theorem is contained in our derivation of the principle of minimum excess dissipation in Section \ref{Method}. In his original statement of the theorem, \citet{helmholtz} showed that the solution of Stokes' equations are critical points of the dissipation and asserted that this critical point was the global minimum; \citet{korteweg1883xvii} provided an early proof that the Stokes flow solution is the global minimum. 

Helmholtz's theorem has two important application areas: first, it allows for a direct proof that the solution of Stokes' equations, with velocity boundary conditions specified, is unique for bounded domains. For if $\mathbf{u}_1$ and $\mathbf{u}_2$ are both solutions of Stokes equations, with the same velocity boundary conditions, then $\mathbf{v}\equiv \mathbf{u}_1-\mathbf{u}_2$ is also a solution of Stokes flow equations, that vanishes on all of $\Omega$'s boundaries. The unique minimizer of the dissipation is then $\mathbf{v} = \mathbf{0}$, assuring that $\mathbf{u}_1=\mathbf{u}_2$.

Second, the minimal dissipation theorem provides a comparison principle that allows for approximate computation of flow properties. For example, given a body $V$, traveling through a fluid at speed $U$, the force $F$ that needs to be applied in the direction of the body motion, is bounded below by the force needed to propel, at the same speed $U$, any body $V_1$ that can be inscribed in $V$, and above by any body, $V_2$, that contains $V$. We can obtain an incompressible flow field that satisfied the velocity boundary conditions on the surface of $V$, by taking the Stokes flow, $\mathbf{u}_2$ surrounding the outscribed body, $V_2$, and then continuously extending down to the boundary of $V$ by assigning the fluid in $V_2-V$ the rigid body speed of $V$. Since no dissipation occurs in the rigid body flow in $V_2-V$, the dissipation of this newly constructed flow is the same as for the flow around $V_2$. Hence, the dissipation of the flow around $V_2$ must exceed the dissipation in the flow field around $V$. Since the dissipation due to a body is the product of its speed and the component of the force propelling it in the direction of its velocity, and the velocities of $V$ and $V_2$ are equal, it follows that the component of the force propelling $V_2$ in the direction of its motion, must exceed the component of the force propelling $V$ in its direction of motion. By choosing $V_1$ and $V_2$ both spheres or other simple body shapes, this argument readily gives analytical upper and lower bounds upon the drag coefficient of any body in Stokes' flow \citep{hill1956extremum}. The same argument ensures that the drag force upon any moving body is increased in the presence of external bodies or boundaries. Minimum dissipation has also been used to bound the energetic efficiency of microswimmers propelled by surface tractions \citep{nasouri2021minimum}. 

Although the original theorem of Helmholtz included only the prescribed velocity on the boundary, modification to include boundaries on which stresses, rather than velocities, are specified is quite straightforward though not widely known. \citet{keller1967extremum} are, to our knowledge, the first to publish a general version of the theorem; in which they showed that Stokes flows in domains whose boundaries include velocity and traction boundary conditions, as well as boundaries with a mixture of the two minimize the excess dissipation; a quantity formed by subtracting off from the dissipation twice the rate of working done by the surface tractions and body forces. They used their theorem to prove results about the rheology of fluids containing solid bodies or droplets. More recently, one of us (Roper) used the discrete form of minimum excess dissipation (which, unaware of \citet{keller1967extremum}, we called the complementary dissipation function, based on its analogy with the complementary energy of externally loaded elastic body), to develop theory for the optimization of fluid transport networks upon which general boundary conditions of flow and pressure were simultaneously applied \citep{chang2018}.

Building upon these works, we investigate the use of minimum excess dissipation to approximate the pressure-driven Stokes flow in microchannels using more complex boundary conditions. The remainder of this paper is organized as follows. In Section \ref{Method}, we revisit the formulation of excess dissipation and our approximation frameworks for the pressure-driven Stokes flow. In Section \ref{Triangular_Rectangular}, we approximate the pressure-driven Stokes flow in channels with triangular and rectangular cross-sections with no-slip boundary conditions prescribed on all channel walls. In Section \ref{Lagrange_Section}, we introduce a modification to the excess dissipation with an additional penalty term to expand the space of valid test functions. In Section \ref{Mixed_Boundary}, we extend our result to larger families of boundary conditions; in particular, we consider channels in which walls are patterned into bands of no-slip and of slip surfaces, and produce approximate analytical conductance formulas that closely replicate the exact results presented in \citet{lauga2003effective}.

\section{Deriving the Principle of Excess Dissipation}\label{Method}

In this paper, we analyze the Stokes flow of an incompressible fluid, which approximates the fluid motion in bounded domains in the limit of low Reynolds number. For our purposes, it is convenient to write these equations in their \textit{weak form}: Suppose $\mathbf{u}$ satisfies the Stokes' equations in a domain $\Omega$, boundary $\partial \Omega$, and the corresponding strain-rate tensor is $\mathbf{E} = \frac{1}{2}(\nabla \mathbf{u} + (\nabla \mathbf{u})^{T})$. Then, if $\mathbf{u}^{\star}$ is any continuously differentiable and incompressible vector field on the same domain, and $\mathbf{E}^{\star} = \frac{1}{2}(\nabla \mathbf{u}^{\star} + (\nabla \mathbf{u}^{\star})^{T})$ is the corresponding strain-rate tensor, then 
 \begin{equation}\label{Weak_form_stokes}
    \int_{\partial \Omega} \mathbf{n}\cdot \boldsymbol{\sigma} \cdot \mathbf{u}^{\star} \: dS + \int_{\Omega} \mathbf{F}\cdot\mathbf{u}^{\star} \: dV = 2\mu \int_{\Omega} \mathbf{E}:\mathbf{E}^{\star} \: dV
 \end{equation}
Here, $\boldsymbol{\sigma}$ is the stress tensor and the right-hand-side of equation (\ref{Weak_form_stokes}) may be written in index notation as $2\mu \int_{\Omega}E_{ij} E^{\star}_{ji}\:dV$. Henceforth, we will assume that no body forces act within $\Omega$; $\mathbf{F}=\mathbf{0}$. Then, we follow \citet{keller1967extremum,chang2018}, by defining: the \textbf{excess dissipation} of a flow in the control volume $\Omega$ whose boundary $\partial \Omega$ may be partitioned into sets $\partial \Omega_T$ on which stress boundary conditions $\bf{n} \cdot \boldsymbol{\sigma}  = T$ are prescribed, and $\partial \Omega_U$ on which velocity boundary conditions $\bf{u} = U$ are prescribed is defined by
    \begin{equation}\label{Comp_Dis}
        \mathcal{H}[\mathbf{u}] := 2\mu \int_{\Omega} \mathbf{E}:\mathbf{E} \: dV - 2\int_{\partial\Omega_T} \mathbf{T}\cdot \mathbf{u} \: dS
      \end{equation}
The first term in equation (\ref{Comp_Dis}) represents the total rate of viscous dissipation by fluid flow in $\Omega$, and the surface integral term represents twice the rate of working of external tractions applied at $\partial \Omega_T$. If the excess dissipation is evaluated for a solution to Stokes' equations, then the rate of dissipation is equal to the rate of work done by the boundary forces. That is $2\mu \int_{\Omega} \mathbf{E}:\mathbf{E} \: dV = \int_{\partial \Omega} \mathbf{n}\cdot \mathbf{\sigma} \cdot \mathbf{u} \: dS$, so for, for these solutions, we obtain $H[\mathbf{u}] = \int_{\partial \Omega_U} \mathbf{n}\cdot \mathbf{\sigma} \cdot \mathbf{U} \: dS - \int_{\partial \Omega_T} \mathbf{n}\cdot \mathbf{T} \cdot \mathbf{u} \: dS$, i.e. the excess dissipation is equal to the work done by boundaries on which velocities are prescribed, minus the work done by the applied surface tractions. Our central approximation framework originates in the minimization of excess dissipation:
\begin{theorem}\label{Excess_dissipation_thm}
(Keller et al., 1967) The Stokes flow of an incompressible fluid minimizes the excess among all incompressible vector fields that satisfy the velocity boundary conditions on $\partial \Omega_U$. 
\end{theorem}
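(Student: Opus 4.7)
The plan is to take the familiar Helmholtz argument and extend it cleanly to mixed boundary conditions by exploiting the fact that the weak form \eqref{Weak_form_stokes} is available for any incompressible test field, not just ones vanishing on the boundary. I begin by letting $\mathbf{u}$ denote the true Stokes flow and $\mathbf{u}^{\star}$ an arbitrary incompressible competitor satisfying $\mathbf{u}^{\star}=\mathbf{U}$ on $\partial\Omega_U$. Define the difference $\mathbf{v}:=\mathbf{u}^{\star}-\mathbf{u}$, with associated strain-rate tensor $\mathbf{E}_v$. Then $\mathbf{v}$ is itself incompressible and, crucially, vanishes on $\partial\Omega_U$ because both $\mathbf{u}$ and $\mathbf{u}^{\star}$ satisfy the same Dirichlet data there.

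Next I expand the excess dissipation using $\mathbf{E}^{\star}=\mathbf{E}+\mathbf{E}_v$ and $\mathbf{u}^{\star}=\mathbf{u}+\mathbf{v}$. The quadratic term gives three pieces: $\mathbf{E}:\mathbf{E}$, a cross term $2\mathbf{E}:\mathbf{E}_v$, and the nonnegative remainder $\mathbf{E}_v:\mathbf{E}_v$. The traction-work integral splits into a piece involving $\mathbf{u}$ and a piece involving $\mathbf{v}$. Collecting everything,
\begin{equation*}
\mathcal{H}[\mathbf{u}^{\star}]-\mathcal{H}[\mathbf{u}] \;=\; 4\mu\!\int_{\Omega}\mathbf{E}:\mathbf{E}_v\,dV \;-\; 2\!\int_{\partial\Omega_T}\mathbf{T}\cdot\mathbf{v}\,dS \;+\; 2\mu\!\int_{\Omega}\mathbf{E}_v:\mathbf{E}_v\,dV.
\end{equation*}

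The central step is to show the first two terms cancel. I apply the weak identity \eqref{Weak_form_stokes} to the true Stokes flow $\mathbf{u}$ with the incompressible test field $\mathbf{u}^{\star}:=\mathbf{v}$ and $\mathbf{F}=\mathbf{0}$, obtaining
\begin{equation*}
2\mu\!\int_{\Omega}\mathbf{E}:\mathbf{E}_v\,dV \;=\; \int_{\partial\Omega}\mathbf{n}\cdot\boldsymbol{\sigma}\cdot\mathbf{v}\,dS.
\end{equation*}
Because $\mathbf{v}$ vanishes on $\partial\Omega_U$, only the $\partial\Omega_T$ portion survives, where the Stokes solution satisfies $\mathbf{n}\cdot\boldsymbol{\sigma}=\mathbf{T}$, giving $2\mu\int_{\Omega}\mathbf{E}:\mathbf{E}_v\,dV = \int_{\partial\Omega_T}\mathbf{T}\cdot\mathbf{v}\,dS$. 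Substituting back leaves
\begin{equation*}
\mathcal{H}[\mathbf{u}^{\star}]-\mathcal{H}[\mathbf{u}] \;=\; 2\mu\!\int_{\Omega}\mathbf{E}_v:\mathbf{E}_v\,dV \;\geq\; 0,
\end{equation*}
which proves the theorem, with equality forcing $\mathbf{E}_v\equiv\mathbf{0}$ (so $\mathbf{v}$ is an infinitesimal rigid motion, hence zero provided $\partial\Omega_U$ is nondegenerate).

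The only real obstacle is conceptual rather than computational: one must notice that the weak form \eqref{Weak_form_stokes} is being applied with a test field that is neither the true solution nor a compactly supported perturbation, and that the splitting of $\partial\Omega$ into $\partial\Omega_U\cup\partial\Omega_T$ is exactly what makes the boundary integral collapse onto $\partial\Omega_T$ and match the traction-work term in $\mathcal{H}$. I would flag this matching as the key insight, since it is precisely what distinguishes the mixed-boundary version from Helmholtz's original statement and explains why the factor of two appears in the definition of $\mathcal{H}$.
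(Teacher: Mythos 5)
Your proof is correct and follows essentially the same route as the paper's: both expand $\mathcal{H}[\mathbf{u}^{\star}]-\mathcal{H}[\mathbf{u}]$, apply the weak form \eqref{Weak_form_stokes} with test field $\mathbf{u}^{\star}-\mathbf{u}$, and use the vanishing of that difference on $\partial\Omega_U$ to collapse the boundary integral onto $\partial\Omega_T$ where it cancels the traction-work term. The only cosmetic difference is that you keep the nonnegative remainder $2\mu\int_{\Omega}\mathbf{E}_v:\mathbf{E}_v\,dV$ as an exact identity while the paper folds it into an inequality chain; the mathematical content is identical.
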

\begin{proof}
    Let $\bf{u}$ be the Stokes flow and $\bf{u^{\star}}$ be any incompressible vector field with $\bf{u^{\star}} = U$ on $\partial \Omega_U,$ and define $\mathbf{E}^{\star} = \frac{1}{2}(\nabla \mathbf{u}^{\star} + (\nabla \mathbf{u}^{\star})^{T})$ to be the corresponding strain-rate tensor. As $\mathbf{E}$ and $\mathbf{E}^{\star}$ are both symmetric, we have:
    \begin{equation}
        2\mu \int_{\Omega} (\mathbf{E}-\mathbf{E}^{\star}):(\mathbf{E}-\mathbf{E}^{\star}) \: dV \geq 0~~, \label{eq:1stinequality}
    \end{equation}
    or, equivalently, on expanding the integrand
    \begin{equation}\label{Stoke_Ie}
        2\mu \int_{\Omega}\mathbf{E}^{*}:\mathbf{E}^{*} \: dV - 2\mu \int_{\Omega} \mathbf{E}:\mathbf{E} \: dV \geq 4\mu \int_{\Omega} \mathbf{E}:(\mathbf{E}^{\star}-\mathbf{E}) \: dV
    \end{equation}
    Also, using the test velocity $\bf{u^{*}}-\bf{u}$ in the weak form of Stokes' equation (\ref{Weak_form_stokes}) gives 
    \begin{equation}\label{Stoke_Weak}
        \int_{\partial \Omega} \textbf{n} \cdot \boldsymbol{\sigma} \cdot (\mathbf{u^{\star} - u}) \: dS + \int_{\Omega} \mathbf{F\cdot(u^{\star}-u)} \: dV = 2\mu \int_{\Omega} \mathbf{E}:(\mathbf{E}^{\star}-\mathbf{E}) \: dV
    \end{equation}
    Putting equations \ref{Comp_Dis}, \ref{Stoke_Ie} and \ref{Stoke_Weak} together with zero body forces ($\bf{F} = 0$), we conclude:
    \begin{align}
        \mathcal{H}[\mathbf{u^{*}}]-\mathcal{H}[\mathbf{u}] &= 2\mu \left(\int_{\Omega} \mathbf{E}^{*}:\mathbf{E}^{*} \: dV- \int_{\Omega} \mathbf{E}:\mathbf{E} \: dV \right) - 2\left(\int_{\partial\Omega_T} \mathbf{T}\cdot \mathbf{u^{*}} \: dS- \int_{\partial\Omega_T} \mathbf{T}\cdot \mathbf{u} \: dS\right) \\
        &\geq 4\mu \int_{\Omega} \mathbf{E}:(\mathbf{E}^{\star}-\mathbf{E}) \: dV - 2 \int_{\partial\Omega_T} \mathbf{T}\cdot (\mathbf{u^{\star}-u}) \: dS \notag \\
        &= 2\int_{\partial \Omega} \textbf{n} \cdot \sigma \cdot (\mathbf{u^{\star} - u}) \: dS - 2 \int_{\partial\Omega_T} \mathbf{T}\cdot (\mathbf{u^{\star}-u}) \: dS \notag \\ 
        &= 2\int_{\partial \Omega_U} \textbf{n} \cdot \sigma \cdot (\mathbf{U-U}) \: dS +  2\int_{\partial \Omega_T} \mathbf{T} \cdot (\mathbf{u^{\star} - u}) \: dS- 2 \int_{\partial \Omega_T} \mathbf{T}\cdot (\mathbf{u^{\star}-u}) \: dS \notag \\
        &= 0 \notag
    \end{align} 
    Hence, $\mathcal{H}[\mathbf{u^{*}}] \geq \mathcal{H}[\mathbf{u}].$ Therefore, the Stokes flow minimizes the excess dissipation. 
\end{proof}
Note that equality is achieved in Eqn. \ref{eq:1stinequality}, when $\mathbf{E}^\star \equiv \mathbf{E}$; that is, only when the strain rates of the flows are identical; thus, the minimizers of $\mathcal{H}$ are Stokes flow plus any rigid body motion; under typical velocity boundary conditions that limit the rigid body motion component, the minimizer  is unique.

In the proceeding sections, we approximate the Stokes flow in channels with different cross-sections and boundary conditions by minimizing excess dissipation over a family of test functions. Formally, for each configuration, we posit a family of smooth test functions $\mathbf{u}_{\text{test}} = \mathcal{F}[c_1,c_2,...,c_n]$ where $\{c_i\}$ are parameters. The values of $\{c_i\}$ are chosen to minimize the functional $\mathcal{H}[\mathbf{u}_{\text{test}}]$; this can be done by setting equal to $0$ each of the partial derivatives of $\mathcal{F}$ with respect to the variables $\{c_i\}$ and solve for the corresponding system of equations. For the original excess dissipation formulation, we require that the test functions $\mathcal{F}$ satisfy the prescribed velocity conditions on the boundary. In Section \ref{Lagrange_Section}, we relax the velocity conditions on the boundary by introducing additional Lagrange multiplier terms to the excess dissipation. Throughout, we approximate the conductivity of the channel; that is the total flow across any cross-section of channel, given an applied unit pressure gradient. We compare our approximate channel conductances with existing analytical solutions and with conductances obtained by solving the Stokes equations using the commercial Finite Element package; COMSOL Multiphysics (Los Angeles, CA).

\section{Approximating flow in a channel with different cross-sections and no-slip boundary conditions}\label{Triangular_Rectangular}
\subsection{Theory for a prismatic channel}\label{Prismatic}
We consider pressure-driven Stokes flow of incompressible fluid in a channel of length $L$ with a constant cross-section shape. Here, we prescribe the pressure with $\mathbf{n} \cdot \boldsymbol{\sigma} = -\Delta p \mathbf{n}$ at $z = 0$ and $\mathbf{n} \cdot \boldsymbol{\sigma} = 0$ at $z = L$. Because of the symmetries in the problem setting, the velocity field is translational invariant along the longitudinal axis. Incompressibility is automatically satisfied since $\nabla \cdot \mathbf{u} = \frac{\partial u}{\partial z} = 0.$ To satisfy the balance of forces, we require: $p = p(z)$ for the $x-$ and $y-$components and 
\begin{equation}\label{Prismatic_Deriv}
    \mu \nabla^2 u = \frac{\partial p}{\partial z} = -\frac{\Delta p}{\mu L}
\end{equation}
where $\Delta p/L$ is the applied pressure gradient. Thus, the excess dissipation minimized by $u$ becomes:
\begin{equation}\label{Excess_Prismatic}
    \mathcal{H}[\mathbf{u}]=\mu L \int_{A}\left(\left(\frac{\partial u}{\partial x}\right)^2+ \left(\frac{\partial u}{\partial y}\right)^2\right) \: dxdy - 2\Delta p \int_{A}u \: dxdy
\end{equation}
Our goal is to calculate the total flux $Q = \int_{A} u \: dxdy$ down the channel as a function of $\frac{\Delta p}{L}$. In general, $Q= C\frac{\Delta p}{L}$ with constant of proportionality $C$, called the \textbf{channel conductivity}, that depends on channel shape.
\subsection{Rectangular cross-section channel}\label{Rectangular}
In this subsection, we consider the pressure-driven Stokes flow of incompressible fluid in a channel of length $L$ with a rectangular cross-section of width $l$ and height $h$. The actual Stokes flow satisfies the Poisson equation inside the rectangular region $\{(x,y): |x| \leq \frac{l}{2}, \quad |y| \leq \frac{h}{2}\}$ in the $x-y$ plane. To construct a family of test functions that vanish along the boundary of the rectangle, we posit $u(x,y) = c\left(\frac{l^2}{4}-x^2\right)\left(\frac{h^2}{4}-y^2\right)$ where $c$ is a constant parameter to be determined by minimizing the excess dissipation. For this assumed velocity profile, we may calculate
\begin{align}\label{Rectangular_Dissipation}
    \mathcal{H}[\mathbf{u}] = 
    \frac{h^3l^3(h^2+l^2)\mu L}{90}c^2 - \frac{h^3l^3\Delta p}{18}c
\end{align}
$\mathcal{H}$ is therefore quadratic in $c$, and has a unique minimizing value of $c$ when $c = \frac{5}{2(h^2+l^2)}\frac{\Delta p}{\mu L}$. In addition, we could calculate the channel conductance at this value of $c$ as:
\begin{equation}
Q = \frac{5h^3l^3}{72(h^2+l^2)}\frac{\Delta p}{\mu L}~. \label{eq:rectangular_conduct}
\end{equation} We may compare this fluid conductance with exact solution of the Poisson equation, obtained using Fourier series \citep{triangular} (Fig. \ref{Rectangular_Cond}):
\begin{equation}\label{exact_rectangular_conductance}
    Q_{\text{exact}} = \left(\frac{h^3l}{12}-\frac{16h^4}{\pi^5}\sum_{n=0}^{\infty}\frac{\cosh{\frac{(2n-1)\pi l}{h}}-1}{(2n-1)^5\sinh{\frac{(2n-1)\pi l}{h}}}\right)\frac{\Delta p}{\mu L}
\end{equation}
The approximation is very good when the height and width are close to each other. In a particular case when the channel is a square $l = h$, our approximation gives $Q = 0.0346\frac{\Delta p h^4}{\mu L}$ which is within $1.5\%$ of the exact channel conductance $Q_{\text{exact}} = 0.0351\frac{\Delta p h^4}{\mu L}$. In real microfluidic channels aspect ratios, $l/h$, may reach $4-8$, over which the error in our conductance estimate increases to $12.28\%$ (Fig. \ref{Rectangular_Cond}). At these high aspect ratios, the error in conductance is probably due to assuming a velocity profile that is quadratic in $x$ as well as in $y$. For sufficiently wide, slot-like, channels, we would expect the velocity profile to be approximately uniform in $x$, except within $\mathcal{O}(h)$ distance of the channel side walls. Although choosing test functions with a more complex $x-$dependence would allow a closer approximation to the real velocity field, they would lose the benefit of having simple-to-evaluate polynomial integrals in the excess dissipation. In Section \ref{Lagrange_Section}, we extend the minimization function $\mathcal{H}$ into a principle of optimization over functions that do not necessarily satisfy all boundary conditions.
\begin{figure}
\centering
\includegraphics[width = 0.5 \textwidth]{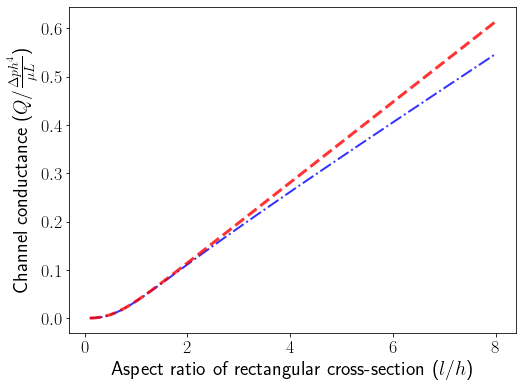}
\caption{In a rectangular channel, a quadratic velocity profile produces conductance estimates (blue dot-dashed curve) that agree closely with the exact conductance (red dashed curve, Eqn. \ref{exact_rectangular_conductance}) for channel aspect ratios $\lesssim 4$.}
\label{Rectangular_Cond}
\end{figure} 
\subsection{Triangular cross-section channel}\label{Triangular}
Similar to Section. \ref{Rectangular}, we consider the pressure-driven Stokes flow of incompressible fluid in a channel of length $L$ with triangular cross-section. By translating, rotating, and reflecting as necessary, we may assume that the vertices of the triangles are $(x_1 = 0,y_1 = 0), (x_2,y_2 = 0),$ and $(x_3,y_3)$ with $x_2 > 0, y_3 > 0.$ To construct test functions which vanish along the edges of the triangle, we move our analysis to barycentric coordinates: 
\begin{align}
    \lambda_1 &= \frac{(y_2-y_3)(x-x_3)+(x_3-x_2)(y-y_3)}{(y_2-y_3)(x_1-x_3)+(x_3-x_2)(y_1-y_3)} \notag \\ 
    \lambda_2 &= \frac{(y_3-y_1)(x-x_3)+(x_1-x_3)(y-y_3)}{(y_2-y_3)(x_1-x_3)+(x_3-x_2)(y_1-y_3)} \\ 
    \lambda_3 &= 1-\lambda_1-\lambda_2 \notag
\end{align}
We then consider a family of test functions of the form $u = c\lambda_1\lambda_2\lambda_3$ where the constant $c$ must be determined. Since we have $\lambda_i = 0$ for some $i \in \{1,2,3\}$ along each edge of the triangle, our test function obeys no-slip boundary conditions wherever it is required to. Here, we may calculate:
\begin{align}\label{Triangular_Dissipation}
    \mathcal{H}[\mathbf{u}] &= 2\mathcal{A} \mu L \int_{0}^{1}\int_{0}^ {1-\lambda_2} \left(\left(\frac{\partial u}{\partial x}\right)^2+\left(\frac{\partial u}{\partial y}\right)^2\right) \: d\lambda_1d\lambda_2- 4\mathcal{A}\Delta p \int_{0}^{1}\int_{0}^ {1-\lambda_2} u \: d\lambda_1d\lambda_2 \notag \\
    &= \left(\frac{(x_2^2-x_2x_3+x_3^2+y_3^2)\mu L}{180x_2y_3}\right)c^2 - \frac{x_2y_3 \Delta p}{60}c = \frac{(d_1^2+d_2^2+d_3^2)\mu L}{720 \mathcal{A}}c^2 - \frac{\mathcal{A}\Delta p}{30}c
\end{align}
where $\mathcal{A}$ is the area of the triangle and $d_1,d_2,d_3$ are the side lengths of the triangle. Thus, $\mathcal{H}$ admits a minimum when $c = \frac{12\mathcal{A}^2}{(d_1^2+d_2^2+d_3^2)}\frac{\Delta p}{\mu L}$ and a flow-pressure relation of $Q = \frac{\mathcal{A}^3}{5(d_1^2+d_2^2+d_3^2)} \frac{\Delta p}{\mu L}$. We test our approximation on a family of isosceles triangles with base $d$ and height $h$ by comparing with numerical solutions obtained from COMSOL Multiphysics (COMSOL, Los Angeles, USA) (Fig. \ref{Triangular_Cond}). Similar to the rectangular cross-section channel, our approximation is highly accurate when the ratio of height to base length is $\lesssim 2$. In fact, when the cross-section is an equilateral triangle ($\frac{h}{d} = \frac{\sqrt{3}}{2} \approx 0.866 $) we obtain $Q = \frac{\sqrt{3}d^4 \Delta p}{320\mu L}$ which is the exact conductance \citep{triangular}. The approximation starts to diverge for slot like channels, likely again because the approximate velocity profile has a forced dependence upon distance from the base that matches less and less well to the real velocity profile.
\begin{figure}
    \centering
    \includegraphics[width = 0.5 \textwidth]{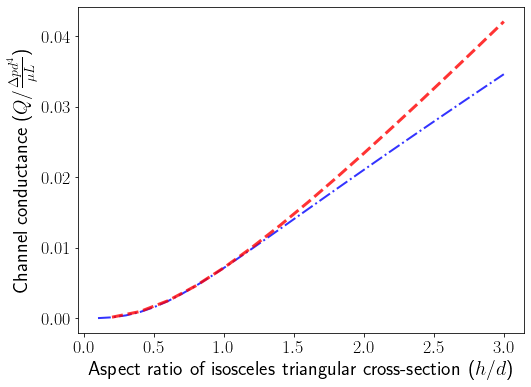}
    \caption{For isosceles triangle shaped channels, conductance approximation using excess dissipation (blue dot-dashed curve) agrees closely with a numerical solution (red dashed curve) for aspect ratio $h/d<2$.}
    \label{Triangular_Cond}
    \end{figure} 
\section{Using Lagrange multipliers to expand the space of test functions}\label{Lagrange_Section}
Hitherto, we have estimated the conductances of channels by minimizing $\mathcal{H}$ over test functions that are forced to obey all velocity boundary conditions imposed upon $\partial \Omega_U$. These test functions may not have velocity profiles similar to the real flow, and it may, in fact, be difficult to obtain non-trivial test velocities that satisfy all prescribed boundary conditions. We therefore partition $\partial \Omega_U$ into subdomains $\partial \Omega_V$ and $\partial \Omega_L$. On $\partial \Omega_V$ we impose that test functions must obey any prescribed velocity boundary conditions. Meanwhile, on $\partial \Omega_L$ we define a Lagrange multiplier $\boldsymbol{\lambda} : \partial \Omega_L \to \mathbb{R}^3$ and add to $\mathcal{H}$ a constraint term enforcing velocity boundary conditions on $\partial \Omega_L$:
\begin{equation}\label{Dissipation_Lagrange}
    \mathcal{H}_L[\mathbf{u},\boldsymbol{\lambda}] = 2\mu \int_{\Omega}\mathbf{E}:\mathbf{E} \: dV - 2\int_{\partial \Omega_T}\mathbf{T}\cdot \mathbf{u} \: dS + \int_{\partial \Omega_L} \boldsymbol{\lambda}\cdot(\mathbf{u}-\mathbf{U}) \: dS
\end{equation}
Now, if we compare two sets of functions $(\mathbf{u}, \mathbf{\boldsymbol{\lambda}})$ and $(\mathbf{u}^{\star}, \mathbf{\boldsymbol{\lambda}}^{\star})$ where $\mathbf{u}$ is the solution of Stokes' equations, we obtain, on appealing to the weak form (\ref{Weak_form_stokes})
\begin{align}\label{Dissipation_Lagrange_Deriv}
    \mathcal{H}_L[\mathbf{u}^{\star},\boldsymbol{\lambda}^{\star}]-\mathcal{H}_L[\mathbf{u},\boldsymbol{\lambda}] &= 4\mu \int_{\Omega} \mathbf{E}^{\star}:(\mathbf{E}-\mathbf{E}) \: dV + 2\mu \int_{\Omega} (\mathbf{E}-\mathbf{E}^{\star})(\mathbf{E}-\mathbf{E}^{\star}) \: dV \\
    &\quad -2\int_{\partial \Omega_T}\mathbf{T}\cdot(\mathbf{u}^{\star}-\mathbf{u}) + \int_{\partial \Omega_L}\left[ \boldsymbol{\lambda}^{\star}\cdot(\mathbf{u}^{\star}-\mathbf{U})-\boldsymbol{\lambda}\cdot(\mathbf{u}-\mathbf{U})\right]\: dS \notag \\
    &= 2\mu \int_{\Omega} (\mathbf{E}-\mathbf{E}^{\star})(\mathbf{E}-\mathbf{E}^{\star}) \: dV \notag + 2 \int_{\partial \Omega_L} \mathbf{n}\cdot \boldsymbol{\sigma}\cdot (\mathbf{u}^{\star}-\mathbf{u}) \: dS \notag\\
    &\quad + \int_{\partial \Omega_L}\left[ \boldsymbol{\lambda}^{\star}\cdot(\mathbf{u}^{\star}-\mathbf{U})-\boldsymbol{\lambda}\cdot(\mathbf{u}-\mathbf{U})\right]\: dS \notag 
\end{align}
Hence if $\mathbf{u} =\mathbf{U}$ on $\partial \Omega_L$ and $\boldsymbol{\lambda} = -2\mathbf{n}\cdot \boldsymbol{\sigma}$ then 
\begin{equation}
    \mathcal{H}_L[\mathbf{u}^{\star},\boldsymbol{\lambda}^{\star}]-\mathcal{H}_L[\mathbf{u},\boldsymbol{\lambda}] = 2\mu\int_{\Omega} (\mathbf{E}-\mathbf{E}^{\star})(\mathbf{E}-\mathbf{E}^{\star}) \: dV + \int_{\partial \Omega_L} (\boldsymbol{\lambda}^{\star}-\boldsymbol{\lambda})(\mathbf{u}^{\star}-\mathbf{U}) \: dS 
\end{equation}
So $\mathbf{u}^{\star} = \mathbf{u}$, $\boldsymbol{\lambda}^{\star} = \boldsymbol{\lambda}$ is a saddle point of the function $\mathcal{H}_{L}$. We then identify critical points of $\mathcal{H}_L$ by using test functions for $\boldsymbol{\lambda}$ and $\mathbf{u}$; note that we can no longer assure that our approximate velocity fields will produce lower bounds on the conductance. To test how close the approximations may be to the real channel conductance in a real example, we revisit the approximation for flow in a rectangular channel. 

\subsection{Rectangular cross-section channel revisited}

As discussed at the end of Section \ref{Rectangular}, while our conductance approximation for the rectangular cross-section channel is accurate when the aspect ratio is near $1$, the approximation does not provide an asymptotically correct behavior for the slot-like channel when the aspect ratio is high. This results from our assumption that the test functions needed to be quadratic in $x$, whereas in reality, we expect the velocity to be constant in $x$ up to an $O(h)$-sized region near the channel side walls. To approximate the velocity profile at high aspect ratios, we consider fluid profiles $u(x,y) = a-cx^2-dy^2$, and a pair of constant Lagrange multipliers, $\boldsymbol{\lambda} = \lambda_1\boldsymbol{e}_z$ on $x=\pm \frac{l}{2}$ and $\lambda=\lambda_2\boldsymbol{e}_z$ on $y=\pm\frac{h}{2}$.

Differentiation of Eqn. (\ref{Dissipation_Lagrange}) with respect to $\lambda_1$ sets the flux on $x = \pm \frac{l}{2}$ equal to 0. That is, 
\begin{equation}
    \int_{-\frac{h}{2}}^{\frac{h}{2}} \left(u \big|_{x = \pm \frac{l}{2}}\right)dy  = \int_{-\frac{h}{2}}^{\frac{h}{2}} \left( a - \frac{cl^2}{4}-dy^2 \right)dy =  0 
\end{equation}
which gives $a = \frac{cl^2}{4} + \frac{dh^2}{12}$. Similarly, setting the derivative with respect to $\lambda_2$ to 0, requires setting the fluxes along the boundaries $y = \pm \frac{h}{2}$ to zero, so that $a = \frac{cl^2}{12} + \frac{dh^2}{4}.$ We may solve for $c$ and $d$ in terms of $a$ as $c = \frac{3a}{l^2}$ and $d = \frac{3a}{h^2}$. Thus, the excess dissipation becomes:
\begin{align}
    \mathcal{H}[\mathbf{u}] &= \frac{\mu L}{3} \left(c^2l^3h + d^2h^3l\right) - \Delta p L h l a  = \frac{\mu L}{3}\left(\frac{9a^2h}{l} + \frac{9a^2l}{h}\right) - \Delta p L h l a \\ 
    &= 3\mu L \left(\frac{h^2+l^2}{hl}\right)a^2 - \Delta p L h l a \notag
\end{align}
which is minimized by $a = \frac{1}{6} \frac{\Delta p}{\mu L} \frac{h^2l^2}{h^2+l^2}.$ Hence, $Q = \frac{1}{2} a l h =  \frac{1}{12}\frac{\Delta p}{\mu L} \frac{h^3l^3}{h^2 + l^2}.$ This formula is asymptotic (within an aspect ratio independent constant) to the exact conductance of the channel in the high aspect ratio limit (see Fig. \ref{Rectangular_Cond_Revisit}).
\begin{figure}
    \centering
    \includegraphics[width = 0.5 \textwidth]{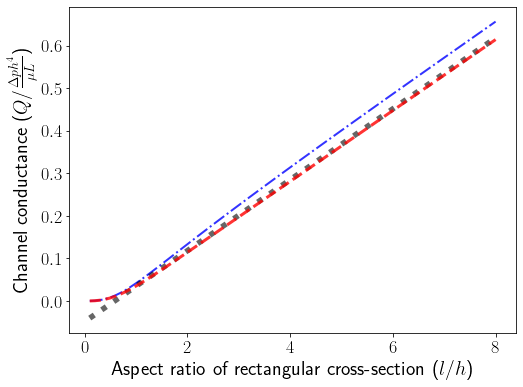}
    \caption{In a rectangular channel using Lagrange multipliers to relax the channel wall velocity constraint produces an asymptotically accurate (blue dot-dashed curve) approximation to the exact conductance (red dashed curve), in the limit of slot like channels. No slip on the widest walls, and optimal selection of the Lagrange multiplier on the narrowest walls gives an approximate conductance that is accurate for all $l/h\gtrsim 1$ (black dotted curve).  }
    \label{Rectangular_Cond_Revisit}
\end{figure} 

An even better approximation can be obtained for slot-like channels by emphasizing that the solution will be approximately $x-$invariant, except near the side walls of the channel: $u(x,y) \approx c(\frac{h^2}{4}-y^2)$. Note that this family of test functions satisfies no-slip on the boundaries $y = \pm \frac{h}{2}$ but not on $x = \pm \frac{l}{2}$. Our approach is to assign Lagrange multipliers at $x = \pm \frac{l}{2}$. However, formally optimizing $\mathcal{H}_L$ with respect to these Lagrange multipliers would produce $\int (u|_{x=\pm l/2})\,dy = 0$ leading to the trivial critical point $c=0$. But, recalling that at the critical point, the Lagrange multiplier is simply the viscous shear due to the flow, and noting that this shear will converge as $l/h$ becomes larger, we choose constant  $\boldsymbol{\lambda}/\left(\frac{\Delta ph}{L}\right) = k\mathbf{e_z},$ and leave $k$ to be determined. In so doing our Lagrange multiplier term becomes simply a penalty term within $\mathcal{H}_L$:
\begin{align}\label{Dissipation_Lagrange_Rectangular}
\mathcal{H}_L[\mathbf{u},\boldsymbol{\lambda}] &= \mathcal{H}[\mathbf{u}] + 2\int_{0}^{L}\int_{-\frac{h}{2}}^{\frac{h}{2}}\left(\frac{hk\Delta p}{L}\right)\cdot u \: dydz = \left(\frac{h^3l\mu L}{3}\right)c^2 + \left(\frac{h^4k\Delta p-h^3l\Delta p}{3}\right)c
\end{align}
Viewing $\mathcal{H}_L$ as a function of $c$ yields a critical value when $c = \left(\frac{l-hk}{2l}\right)\frac{\Delta p}{\mu L}$, and this value of $c$ gives flux $Q = \frac{h^3(l-hk)}{12}\frac{\Delta p}{\mu L}$. What is an optimal choice for $k$? We equate the new expression for the channel conductance with our previous approximation in Section \ref{Rectangular} when the aspect ratio is $1$. That is, 
\begin{align*}
\left[\frac{h^3(l-hk)}{12}\frac{\Delta p}{\mu L} = \frac{5h^3l^3}{72(h^2+l^2)}\frac{\Delta p}{\mu L} \right]\bigg|_{\frac{l}{h}=1} \Longrightarrow \quad k = \frac{7}{12} 
\end{align*}
Our new approximation using the Lagrange multiplier as a penalty term accurately approximates the channel conductance for all $h/L>1$, including in the limit of slot-like channels (Fig. \ref{Rectangular_Cond_Revisit}).

\section{Approximating flow in a cylindrical channel with slip boundary conditions}\label{Mixed_Boundary}

\subsection{Theory for a channel with slip boundary conditions}

In this section, we extend our framework for approximating microchannel flows to include more complex boundary conditions. Advances in microfluidic engineering now allow for construction of channels in which fluids slip along one or more channel walls \citep{Lauga2007}. We use our approximation method to develop approximate conductance laws for cylindrical channels with bands of slip and no-slip boundary conditions, a scenario for which exact conductance laws have previously been derived \citep{lauga2003effective}. We first modify the minimum excess dissipation principle with Lagrange multipliers, from Eqn. (\ref{Dissipation_Lagrange}), to include when the boundary of $\Omega$ can be divided into 3 sets; $\partial \Omega_T$ on which tractions $\mathbf{n} \cdot \boldsymbol{\sigma} = \mathbf{T}$ are specified, $\partial \Omega_U$ on which $\mathbf{u} = \mathbf{U}$ is specified and $\partial \Omega_S$ (the slip boundary) upon which $\mathbf{u} \cdot \mathbf{n} = 0$ and $(\mathbf{n} \cdot \boldsymbol{\sigma})\cdot( \mathbb{1}-\mathbf{n}\mathbf{n}) = \mathbf{S}$, meaning tangential components of the stress are known and there is no normal flow. As before, we release ourselves from the constraint of satisfying all the velocity boundary conditions by subdividing $\partial \Omega_U$ into $\partial \Omega_V$, where $\mathbf{u}$ is prescribed, and $\partial \Omega_L$, where we define a Lagrange multiplier to satisfy an approximate form of the velocity constraint. Then, following the same derivation steps as for equation (\ref{Dissipation_Lagrange_Deriv}), we obtain that 
\begin{equation}\label{Dissipation_Lagrange_Slip}
    \mathcal{H}_L[\mathbf{u}, \boldsymbol{\lambda}] = 2\mu \int_{\Omega}\mathbf{E}:\mathbf{E} \: dV - 2\int_{\partial \Omega_T}\mathbf{T}\cdot \mathbf{u} \: dS -2\int_{\partial \Omega_S}\mathbf{S} \cdot \mathbf{u}\: dS + \int_{\partial \Omega_L} \boldsymbol{\lambda}\cdot(\mathbf{u}-\mathbf{U}) \: dS
\end{equation}
has a saddle point when $\mathbf{u}$ is the Stokes flow and $\boldsymbol{\lambda} = -2\mathbf{n}\cdot\boldsymbol{\sigma}$, and the saddle point is a minimum point if $\partial \Omega_L = \emptyset$. 

In the special case where $\partial \Omega_S$ is a perfect slip boundary ($\mathbf{S} = \mathbf{0}$) we find that the Stokes flow minimizes $\mathcal{H}[\mathbf{u}] = 2\mu \int_{\Omega} \mathbf{E}:\mathbf{E} \: dV - 2\int_{\partial \Omega_T} \mathbf{T}\cdot\mathbf{u} \: dS$ among all velocity fields satisfying no slip boundary conditions on $\partial \Omega_U$ and no through flow boundary conditions on $\partial \Omega_S$. We also note that minimization of excess dissipation gives us a comparison principle for channels with different areas of slip boundary. Consider, for example, two channels $\Omega_1, \Omega_2$ with the same cross-section shape and applied pressure drop $\Delta p$. Suppose the slip regions $\partial \Omega_{S_1}$ of $\partial \Omega_{1}$ are completely contained within the slip regions $\partial \Omega_{S_2}$ of $\partial \Omega_{2}$. Suppose $\mathbf{u}_1, \mathbf{u}_2$ are the corresponding Stokes flow and $\mathcal{H}_{1},\mathcal{H}_2$ are the associated excess dissipation functionals for each channel. Then by the minimization principle, since $\mathbf{u}_1$ satisfies the no through flow on $\partial \Omega_{S_2} \subset \partial \Omega_{S_1} \cup \partial \Omega_{U_1}$ and no slip on $\partial \Omega_{U_2} \subset \partial \Omega_{U_1}$, it follows that 
\begin{equation}\label{slip_dissipation_ie}
    \mathcal{H}_2[\mathbf{u}_2] \leq \mathcal{H}_2[\mathbf{u}_1] = \mathcal{H}_1[\mathbf{u}_1]
\end{equation}
where we have made use of the fact that $\mathbf{T}_1 = \mathbf{T}_2$ and $\partial \Omega_{T_1} = \partial \Omega_{T_2}$ to identify $\mathcal{H}_1$ and $\mathcal{H}_2$. In the case of slip boundaries, external work is done only by the surface tractions applied on $\partial \Omega_{T}$, and 
\begin{equation}
    \text{rate of dissipation} = 2\mu \int_{\Omega} \mathbf{E}:\mathbf{E} \: dV = \int_{\partial \Omega_T} \mathbf{T} \cdot \mathbf{u} \: dS
\end{equation}
So our inequality (\ref{slip_dissipation_ie}) may be rewritten as 
\begin{align}
    &-\int_{\partial \Omega_{T}}\mathbf{T}\cdot\mathbf{u}_2 \: dS \leq - \int_{\partial \Omega_{T}}\mathbf{T}\cdot \mathbf{u}_1 \notag \\
    &\text{or} \quad \quad \quad  Q_2 \Delta p \geq Q_1 \Delta p
\end{align}
In other words, the conductance of a channel is always expanded by expanding the zones of slip along the channel walls.

In the following sections, we consider an infinitely long circular-cylindrical channel patterned by perfect slip and no slip surfaces on two elementary configurations we call longitudinal and transverse (see Fig. \ref{config}). In the longitudinal configuration, the slip region runs parallel to the flow direction, taking the form of a stripe occupying an arc of angle $0 \leq \psi < 2\pi$ along the boundary of the channel. In the second configuration, the slip regions are distributed transverse to the flow direction, taking the form of rings with width $l$ periodically distributed along the channel with period $L$. 

\begin{figure}
    \centering
    \includegraphics[width = 0.5 \textwidth]{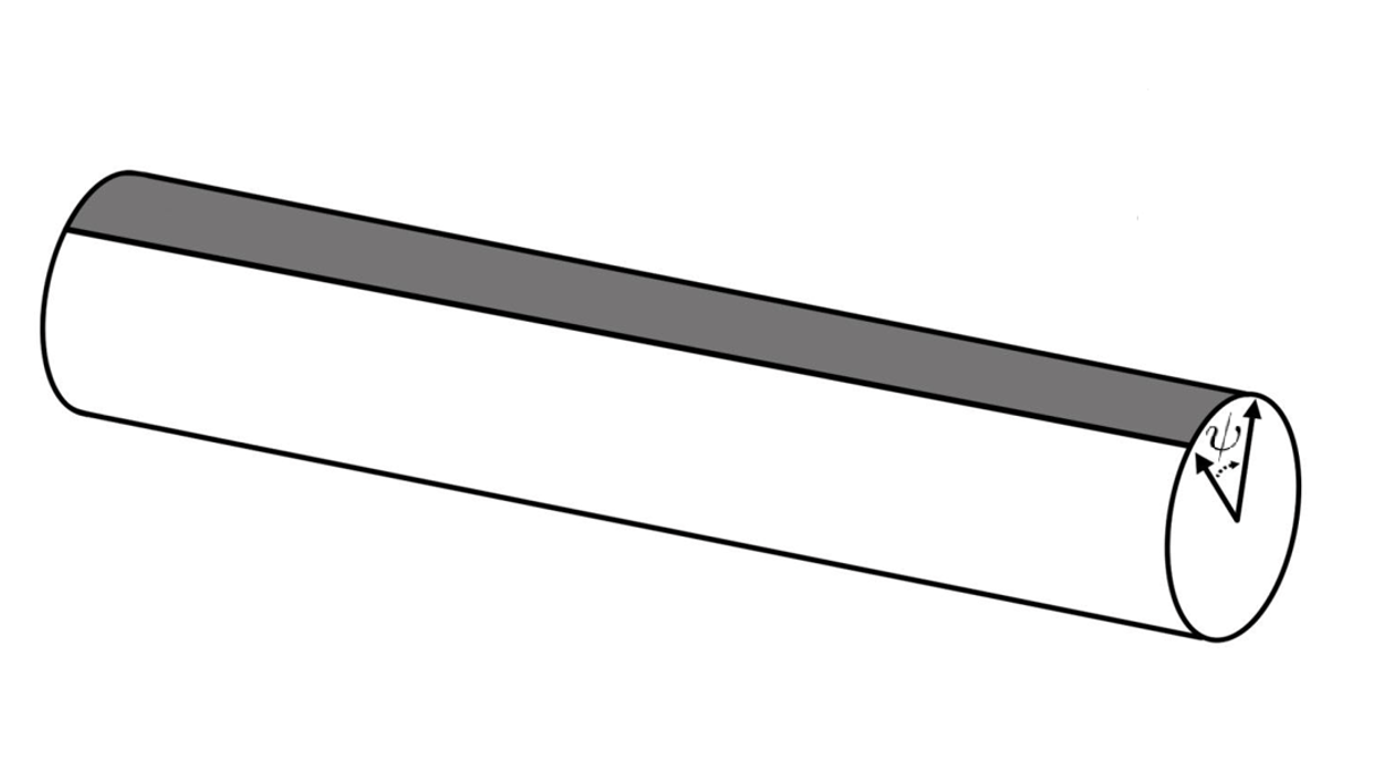}(a)
    \includegraphics[width = 0.5 \textwidth]{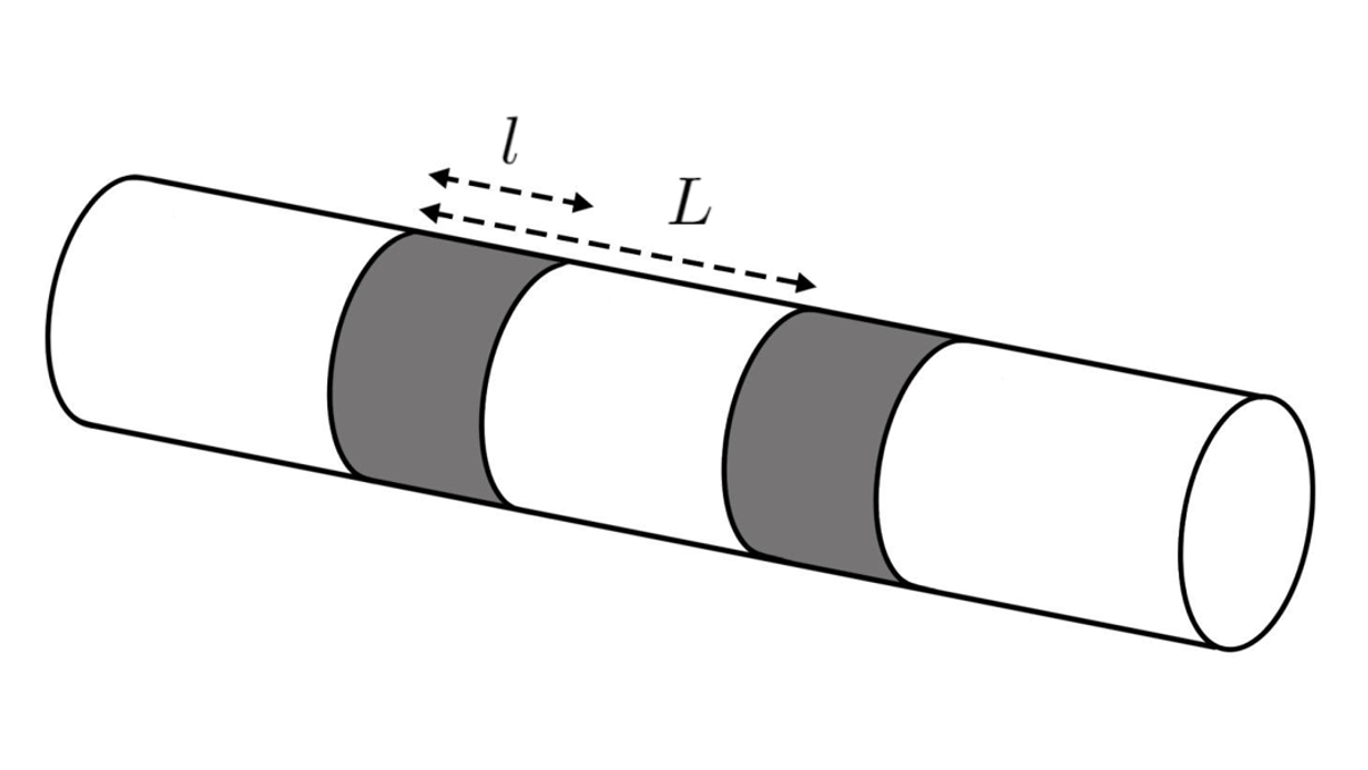}(b)
    \caption{Two elementary configurations of an infinitely long circular-cylindrical channel patterned by perfect slip (shaded region) and no slip surfaces: (a) longitudinal configuration (b) transversal configuration.} 
    \label{config}
\end{figure} 

As usual, for both configurations, we prescribe periodic velocity boundary conditions and stress boundary conditions $\mathbf{n} \cdot \boldsymbol{\sigma} = -\Delta p \mathbf{n}$ at $z = 0$ and $\mathbf{n} \cdot \boldsymbol{\sigma} = 0$ at $z = L$. We then consider the Lagrange multiplier acting entirely along the no slip region $\partial \Omega_{U} = \partial \Omega_{L} = \partial \Omega_{\text{no slip}}$. 

\subsection{Longitudinal configuration}\label{Longitudinal}
In the longitudinal configuration (Fig. \ref{config}(a)), the flow is translationally invariant. We shall work in cylindrical coordinates $(r,\theta,z)$ and assume that the no-slip region occupies the region $ |\theta| < \pi - \frac{\psi}{2}$ while the slip region occupies $|\theta| > \pi - \frac{\psi}{2}$. If the channel radius is $a$, we may non-dimensionalize velocities by $\frac{\Delta p}{\mu L}a^2$, pressures by $\Delta p$, and lengths by $a$, so that Eqn. (\ref{Prismatic_Deriv}) becomes $\nabla^2 u = -1$, and the underlying boundary conditions become:
\begin{align}\label{longitudinal_bcs}
    u &= 0 \quad \text{for } r = 1, \quad 0 \leq z \leq L, \quad |\theta| < \pi - \frac{\psi}{2} \\
    \frac{\partial u}{\partial r} &= 0 \quad \text{for  } r = 1, \quad 0 \leq z \leq L, \quad \pi - \frac{\psi}{2} \leq |\theta| \leq \pi \notag
\end{align}
When $\psi=0$, the boundary conditions are entirely no slip. Therefore, we expect that the dimensional velocity profile and channel conductance would obey the Hagen-Poiseuille law, $u_{HP} = \frac{1}{4}\left(1-r^2\right)$, $Q_{HP} = \frac{\pi}{8}$. Conversely, in the case where $\psi \to 2\pi$, so that the entire channel is slip, the velocity profile will approach plug flow, and the conductance will diverge. 

\subsubsection{Approximation using combination of Hagen-Poiseuille flow with harmonic basis}

Since the harmonic functions $r^n \cos{n \theta}$ and $r^n \sin{n \theta}$ form a complete basis, we generate a family of test functions by adding to the Hagen-Poiseuille flow the first two harmonic functions $\mathbf{u} = u(r,\theta) \mathbf{e_z} = (c_1 + c_2r^2 + c_3r \cos{\theta} + c_4r^2 \cos{2 \theta})\mathbf{e_z},$ where $c_1,c_2,c_3,c_4$ are constant parameters to be determined and our choice of cosine is dictated by the symmetry of the problem. In addition, we consider two schemes of Lagrange multiplier $\boldsymbol{\lambda}_1 = k\mathbf{e_z}$ and $\boldsymbol{\lambda}_2 = (k+ j\cos{\frac{\theta}{2}})\mathbf{e_z}$ defined on the no-slip region $|\theta| < \pi - \frac{\psi}{2}$ where $k$ and $j$ are constant parameters to be determined; an additional term $j \cos{\frac{\theta}{2}}$ is included to mimic the fact that at the extremum, $\boldsymbol{\lambda}$ is proportional to the shear stress: $\boldsymbol{\lambda}$ takes its maximum value at the center of no-slip region $\theta = 0$, and we expect it to decrease away from the center of the no-slip band. We may calculate the (non-dimensionalized) modified excess dissipation:
\begin{small}
\begin{align}
    \tilde{\mathcal{H}}_{L}[\mathbf{u},\boldsymbol{\lambda}_1] &= (2c_2^2+c_3^2+2c_4^2)\pi - (2c_1+c_2)\pi + k\left((c_1+c_2)(2\pi - \psi) + 2c_3 \sin{\frac{\psi}{2}}-c_4\sin{\psi}\right) \\
    \tilde{\mathcal{H}}_{L}[\mathbf{u},\boldsymbol{\lambda}_2] &= (2c_2^2+c_3^2+2c_4^2)\pi - (2c_1+c_2)\pi + (2\pi-\psi)(c_1+c_2)k + (4c_1+4c_2+2c_3)j\cos{\frac{\psi}{4}} \notag \\
    &\quad -\frac{2}{3}(c_3+c_4)j\cos{\frac{3\psi}{4}}+\frac{2}{5}c_4j\cos{\frac{5\psi}{4}} + 2c_3k\sin{\frac{\psi}{2}}-c_4k\sin{\psi}
\end{align}
\end{small}
The values of $c_1,c_2,c_3,c_4,k$, and $j$ are then obtained at the critical point of $\tilde{\mathcal{H}}_{L}[\mathbf{u},\boldsymbol{\lambda}]$ by solving the system of equations $\frac{\partial \tilde{\mathcal{H}}_{L}}{\partial c_1} = \frac{\partial \tilde{\mathcal{H}}_{L}}{\partial c_2} = \frac{\partial \tilde{\mathcal{H}}_{L}}{\partial c_3} = \frac{\partial \tilde{\mathcal{H}}_{L}}{\partial c_4} = \frac{\partial \tilde{\mathcal{H}}_{L}}{\partial k} = \frac{\partial \tilde{\mathcal{H}}_{L}}{\partial j} = 0$ using the computer algebra software Mathematica (Wolfram Research, Champaign, USA). We then calculate the channel conductance $Q$ from our approximated flow
\begin{small}
    \begin{align}
    Q_{\text{longitudinal}}[\mathbf{u},\boldsymbol{\lambda}_1] &= \frac{\pi}{8}+\frac{16\pi\sin^2{\frac{\psi}{2}}+2\pi\sin^2{\psi}}{4(2\pi-\psi)^2} \\
    Q_{\text{longitudinal}}[\mathbf{u},\boldsymbol{\lambda}_2] &= -\bigg( \bigg(  \pi \bigg( -2109 - 555 (2\pi - \psi)^2 + 96(-1 + 6(2\pi-\psi)^2)\left( \cos{\frac{\psi}{2}}\right) \\ \notag
    &- 6(-305 + 27(2\pi-\psi)^2 )\cos{\psi} + (80 + 48(2\pi- \psi)^2)\left(\cos{\frac{3\psi}{2}}\right) \\ \notag
    &+ 277 \cos{2 \psi} - 9(2\pi - \psi)^2 \cos{2 \psi} + 16\cos{ \frac{5 \psi}{2}} + 2 \cos{3 \psi}     \\ \notag
    &+ 2640(2\pi-\psi)\left(\sin{\frac{\psi}{2}}\right)- 840(2\pi - \psi)\sin{\psi} + 240(2\pi - \psi) \left(\sin{\frac{3 \psi}{2}}\right) \\ \notag
    &- 90(2\pi - \psi)\sin{2 \psi} \bigg) \bigg) \bigg/ \\ \notag
    &\bigg( 24\bigg(675 + 185(2\pi - \psi)^2 - 192(2\pi - \psi)^2\left(\cos{\frac{\psi}{2}}\right) \\ \notag
    &+ 6(-100 + 9(2\pi - \psi)^2)\cos{\psi} - 16(2\pi-\psi)^2\left(\cos{\frac{3\psi}{2}}\right) \\ \notag
    &- 75\cos{2\psi} + 3(2\pi - \psi)^2 \cos{2 \psi} - 880 (2\pi - \psi) \left(\sin{\frac{\psi}{2}}\right) \\ \notag
    &+ 280(2\pi - \psi) \sin{\psi} - 80(2\pi -\psi) \left(\sin{\frac{3\psi}{2}}\right) + 30( 2\pi-\psi) \sin{2\psi} \bigg) \bigg) \bigg)
    \end{align}
\end{small}
and compare the result with the actual fluid conductance obtained from solving Eqn. (\ref{Prismatic_Deriv}) over different values of $\psi$, (Fig. \ref{longitudinal_fig}): 
\begin{figure}
\centering
\includegraphics[width = 0.5 \textwidth]{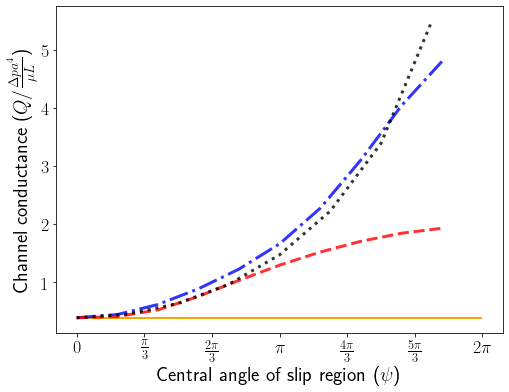}
\caption{Fluid conductance approximation for the longitudinal slip configuration using modified excess dissipation and two schemes of Lagrange multiplier terms $\boldsymbol{\lambda}_1$ (blue dot-dashed curve) and $\boldsymbol{\lambda}_2$ (red dashed curve). The $\lambda_2$ approximation asymptotically captures the numerically calculated increase of conductance (dotted black curve) away from the Hagen-Poiseuille estimate (orange line) as slip regions are first introduced, though, likely serendipitously, the simpler approximation renders the increasing conductance more accurately at moderate values of $\psi$.}
\label{longitudinal_fig}
\end{figure} 
The approximate conductance obtained with Lagrange multiplier $\boldsymbol{\lambda}_2$  is asymptotically correct when $\psi \to 0$, so the channel is mostly no slip, however, perhaps serendipitously, the fluid conductance obtained using $\boldsymbol{\lambda}_1$, though not asymptotically exact for small $\psi$ matches the numerically obtained conductance over a larger range in $\psi$ (Fig. \ref{longitudinal_fig}). However, neither approximation agrees with the true conductance when it diverges as $\psi\to 2\pi$.

To improve upon our approximation we try to approximate the flow field and shear stress (since at the extremum of $\mathcal{H}_L$, $\boldsymbol{\lambda}$ is proportional to the shear stress) more precisely, in the limit of large slip bands. We find empirically, through COMSOL simulation, that the shear stress is close to uniform across the no-slip panel but diverges at the edge of the slip band (Fig. \ref{Shear_Stress}(a)). To model this divergence we perform a local analysis of the flow equations at the junction of slip and no-slip panels. Near to this junction we treat the channel as locally flat, defining a distance, $\rho$, from the junction line and an angle $\varphi$ made with the channel wall (Fig. \ref{Shear_Stress}(b)). 

\begin{figure}
    \centering
    \includegraphics[width = 0.5 \textwidth]{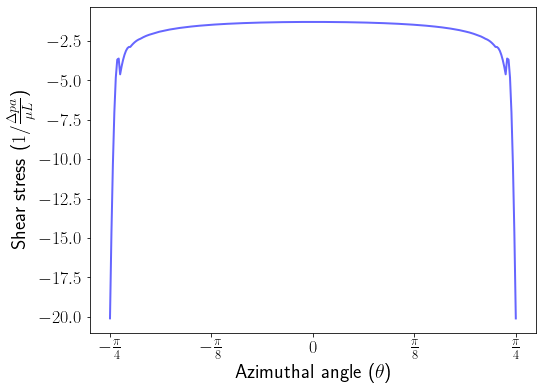}(a)
    \includegraphics[width = 0.4 \textwidth]{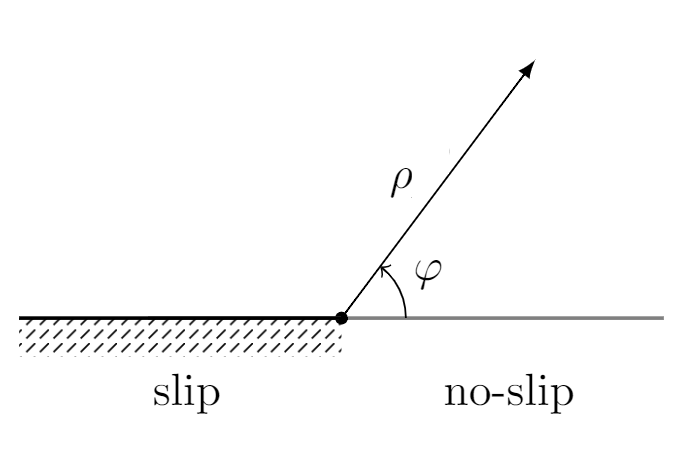}(b)
    \caption{More careful modeling of the shear stresses along the no-slip panel allows us to approximate the flow field in the limit of small slip bands. (a) Numerically computed shear stress along the boundary of a longitudinal configuration channel with slip region of central angle $\psi = \frac{3\pi}{2}$ so that the no-slip region occupies the boundary $-\frac{\pi}{4} < \theta  < \frac{\pi}{4}$. (b) Coordinate system used for local analysis of the shear stress at the interface between the slip and no-slip boundary.} 
    \label{Shear_Stress}
\end{figure} 

Now, in the dimensionless form of the equations we have $\nabla^2 u = -1$ and so $u = -\frac{1}{4}\rho^2 + u'$ where $u'$ is a harmonic function. We know that harmonic functions are a sum of terms $\rho^{\alpha} \sin{\alpha \varphi}, \rho^{\alpha} \cos{\alpha \varphi}$. On each order of $\rho$, we impose boundary conditions:
\begin{align}
    u' &= 0 \quad \text{at} \quad \varphi = 0 \\
    \frac{1}{\rho}\frac{\partial u'}{\partial \varphi} &= 0 \quad \text{at} \quad \varphi = \pi
\end{align}
Taken together, our boundary conditions admit only harmonic functions of the form $\rho^{\alpha} \sin{\alpha \varphi}$ with $\alpha \pi \in (\mathbb{Z}+\frac{1}{2})\pi$. Restricting to velocity fields that remain bounded as $\rho \to 0$, we obtain that 
\begin{equation}
    u' = c_1\rho^{\frac{1}{2}}\sin{\frac{1}{2}\varphi}+ c_2 \rho^{\frac{3}{2}}\sin{\frac{3}{2}\varphi} - \frac{1}{4}\rho^2 + o(\rho^2) 
\end{equation}
Hence, the shear stress on the no-slip boundary at $\varphi = 0$ is equal to:
\begin{equation}
    \frac{1}{\rho}\frac{\partial u'}{\partial \varphi}|_{\varphi = 0} = \frac{1}{2}c_1\rho^{-\frac{1}{2}}+\frac{3}{2}c_2\rho^{\frac{1}{2}}+o(\rho)
\end{equation}
In particular, the shear stress diverges with the inverse square root of distance from the boundary between slip and no-slip panels. The results of our local analysis can be recapitulated by expanding the exact solution of the velocity field in \citet{exact_longitudinal} around the panel edge. We note that although the shear stress diverges at the edge of the no-slip panel, the total shear force obtained by integrating the shear stress remains finite.

Since the boundary between no-slip region and slip-region occurs at $\theta = \pm(\pi - \frac{\psi}{2})$, we posit two new types of Lagrange multipliers of the form  $\boldsymbol{\lambda}_3 = \frac{k}{\sqrt{(\pi-\frac{\psi}{2})^2-\theta^2}}\mathbf{e_z}$ and  $\boldsymbol{\lambda}_4 = k\left(\frac{1}{\sqrt{(\pi-\frac{\psi}{2})^2-\theta^2}} + \sqrt{(\pi-\frac{\psi}{2})^2-\theta^2}\right)\mathbf{e_z}$. Although more complicated Lagrange multipliers can be introduced, such as $\lambda_3 + j$ $\frac{k}{\sqrt{(\pi-\frac{\psi}{2})^2-\theta^2}} + j\sqrt{(\pi-\frac{\psi}{2})^2-\theta^2}$, adding additional unknown parameters produces over-determined systems that must be solved to obtain the critical point. Thus, we proceed with these two types of Lagrange multiplier $\boldsymbol{\lambda}_3,\boldsymbol{\lambda}_4$ and obtain the relation between the flow and channel conductance:   
\begin{small}
\begin{align}
    Q_{\text{longitudinal}}[\mathbf{u},\boldsymbol{\lambda}_3] &= \frac{\pi}{8}+ \pi J_{0}\left[\frac{\psi}{2}-\pi\right]^2+ \frac{\pi}{2}J_{0}\left[\psi-2\pi\right]^2  \\
    Q_{\text{longitudinal}}[\mathbf{u},\boldsymbol{\lambda}_4] &= \frac{\pi}{8}+ \frac{2\pi}{(8+(2\pi-\psi)^2)^2}\bigg((2\pi-\psi)J_{1}[\psi-2\pi]-4J_{0}[\psi-2\pi]\bigg)^2 \notag \\
    &\quad + \frac{16\pi}{(8+(2\pi-\psi)^2)^2}\left((2\pi-\psi)J_{1}\left[\frac{\psi}{2}-\pi\right]-2J_{0}\left[\frac{\psi}{2}-\pi\right]\right)^2 
\end{align}
\end{small}
where $J_{n}[x]$ are Bessel functions of the first kind. As seen in Fig. \ref{inv_sqrt_Lagrange}, the two new schemes yield good approximations to the numerically computed conductance law across the full range of assayed values of $\psi$, albeit with an error that continues to increase as we approach the singular limit $\psi\to 2 \pi$.
\begin{figure}
\centering
\includegraphics[width = 0.50 \textwidth]{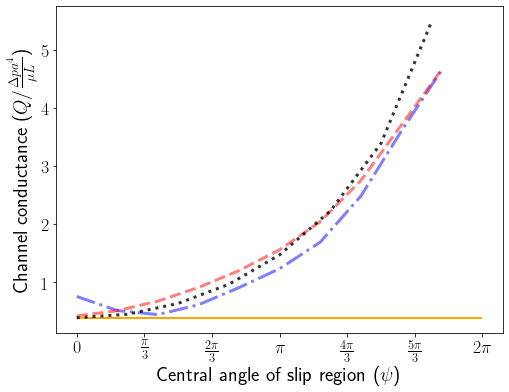}
\caption{Approximating the conductance of a channel with longitudinal using Lagrange multipliers $\lambda_3$ (one term, blue dot-dashed curve) and $\lambda_4$ (two terms, red dashed curve) models for the singular stress in the slip region, produces closer approximations to the numerically obtained channel conductance (black dotted curve). Again, the Hagen-Poiseuille conductance is included as a reference (orange curve).}  
\label{inv_sqrt_Lagrange}
\end{figure}

\subsection{Transverse configuration}\label{Transversal}
Next, we consider a periodic array of stream-perpendicular panels of slip, specifically, we consider a cylinder of length $L$, extended periodically, in which the slip regions arranged transverse to the flow direction, having width of $l$. To ensure consistency with our setups for the longitudinal configuration, we assume that the no-slip boundary condition occupies the region $|z| < \frac{L-l}{2}$ and slip boundary condition occupy the region $ \frac{L-l}{2} \leq |z| < \frac{L}{2}.$ Our boundary conditions are no longer translationally invariant, and the Stokes equations governing fluid flows may no longer be reduced to solving Poisson's equation. Instead, since the flow is axisymmetric, we may define a streamfunction $\psi(r,z)$ with 
\begin{equation}\label{streamfunction}
    u_r = \frac{1}{r}\frac{\partial \psi}{\partial z}, \quad u_z = -\frac{1}{r}\frac{\partial \psi}{\partial r}
\end{equation}
As $\psi$ is defined only up to a constant, we may impose that the streamfunction vanishes along the channel boundary. The boundary conditions for the velocity field and stress may then be written as:
\begin{align}
    \psi &= 0 \quad \text{for} \quad r=1~, \notag \\
    \frac{\partial \psi}{\partial r} &= 0 \quad \text{for} \quad r=1, \quad |z| < \frac{L-l}{2}~, \\
    \frac{\partial}{\partial r}\left(\frac{1}{r}\frac{\partial \psi}{\partial r}\right) &= 0 \quad \text{for} \quad r=1, \quad \frac{L-l}{2} \leq |z| < \frac{L}{2}~. \notag
\end{align}
Note that the streamfunction for the Poiseuille flow (i.e. if $l = 0$) is $\psi_{HP} = \frac{1}{16}(1-r^2)^2$.

Before we minimize the excess dissipation in this geometry we consider a long-wavelength approximation for the flow: if the panels of slip and no-slip boundaries were all long, relative to the diameter of the channel, then we expect that over much of the no-slip panel flow will be Poiseuille, whereas over much of the slip-panel the flow profile will be plug flow. In this sense, we may approximate the conductance by roughly assuming that the pressure only acts on the no-slip region, neglecting the transition region between the two type of flows. Hence:
\begin{equation}\label{rough_approx}
    Q_{\text{long}} \approx \frac{\pi}{8}\frac{\Delta p}{\mu (L-l)}
\end{equation}
We present the long wavelength approximation along with the results of our excess dissipation optimization in Fig. \ref{Transverse_Usual} 
\subsubsection{Approximation using combination of Hagen-Poiseuille streamfunction and harmonic basis}\label{Transversal_usual}
We incorporate the background Poiseuille flow with functions similar to the harmonic basis and posit test streamfunctions of the form 
\begin{equation}
    \psi(r,z) = (1-r^2)\left(c_1+c_2r^2+c_3r^2\cos{\frac{2\pi z}{L}}+c_4r^2\cos{\frac{4\pi z}{L}}\right)
\end{equation}
with the parameters $c_1,c_2,c_3$, and $c_4$ to be determined. Notice that our expansion includes only even powers of $r$ to ensure that $\psi$ is an even function of $r$. With the success of constant Lagrange Multiplier term in Section. \ref{Longitudinal}, we fix $\lambda = k\mathbf{e_z}$ and compute the modified energy dissipation (\ref{Dissipation_Lagrange}) in polar coordinates as:
\begin{small}
\begin{align}\label{modified_dissipation_transverse}
    \tilde{\mathcal{H}}_{L}[\mathbf{u},\boldsymbol{\lambda}]   &= 2(c_1+c_2)k(L-l)-4c_1\pi+32c_2^2L\pi + \left(\frac{16L^4+52.64L^2+64.94}{L^3}\right)c_3^2\pi \notag \\
    &\quad +\left(\frac{16L^4+210.55L^2+1039.03}{L^3}\right)c_4^2\pi + \frac{2kL}{\pi}\left(c_3-c_4\cos{\frac{l\pi}{L}}\right)\left(\sin{\frac{l\pi}{L}}\right)
\end{align}
\end{small}
Here, we may directly compute $\tilde{\mathcal{H}}_{L}[\mathbf{u},\boldsymbol{\lambda}]$ using $\psi$ by calculating $u_r$ and $u_z$ as in Eqn. (\ref{streamfunction}). We solve for each constant $c_1,c_2,c_3,c_4,k$ at the critical point of $\tilde{\mathcal{H}}_{L}[\mathbf{u},\boldsymbol{\lambda}]$ and obtain the governing non-dimensionalized conductance law:
\begin{footnotesize}
\begin{align}
    Q_{\text{transverse}} &= \frac{\pi}{8}\cdot\Bigg((263.57 + 267.05L^2 + 112.29L^4 +16.45L^6 + L^8)(L-l)^2 \notag\\
    &\quad + (13.16L^6+2.67L^8+0.20L^{10})\left(\sin^2{\frac{l\pi}{L}}\right) + (0.82L^6+0.67L^8+0.20L^{10})\left(\sin^2{\frac{l\pi}{L}}\cos^2{\frac{l\pi}{L}}\right)\Bigg)\notag\\
    &\quad \cdot \Bigg( (L-l)^2(L^4+3.29L^2+4.06)(L^4+13.16L^2+64.94)\Bigg)^{-1} \label{eq:transverse}
\end{align}
\end{footnotesize}
We consider two prototypical channels with lengths $L/a = 1$ and $L/a = 10$, vary the length of slip-region $l$, and compare with numerically computed flux obtained using COMSOL Multiphysics (Fig. \ref{Transverse_Usual}). Here, the approximation using modified excess dissipation shows a similar behavior with those we have obtained for the longitudinal configuration: the approximation does reasonably well when the boundary conditions are mostly no-slip and systematically outperforms the long wavelength approximation of Eqn. (\ref{rough_approx}). However, minimization over the chosen family of test functions under-predicts the diverging conductance obtained as $l \to L$.
\begin{figure}
\centering
\includegraphics[width = 0.9 \textwidth]{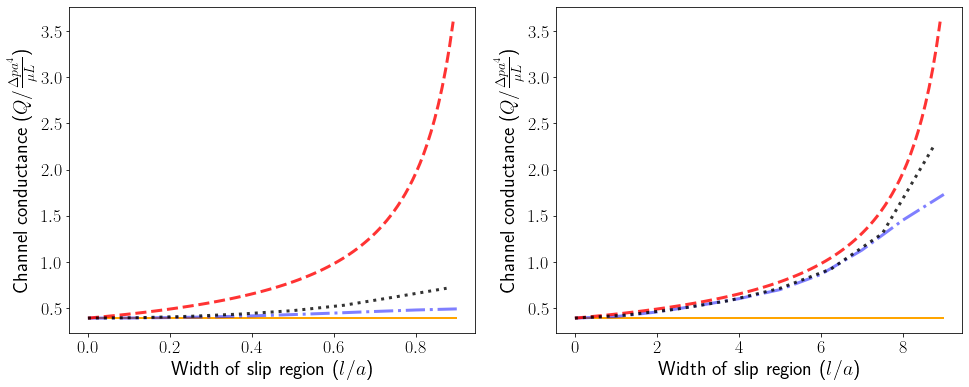}
\caption{Approximation of fluid conductances in channels with transverse patches of slip, of different widths. The panels represent different periodicity length scales $\frac{L}{a} = 1$ (left panel) and $\frac{L}{a} = 10$ (right panel). Approximating the fluid velocity using a polynomial basis produces conductances (Eqn. \ref{eq:transverse}, blue dot-dashed curve), that more accurately model the numerically computed conductance (black dotted curve), than the long wavelength approximation (Eqn. \ref{rough_approx}, red dashed curve), for moderate values of $l/L$, but not as $l/L\to 1$. In both plots, the solid orange curve is the Hagen-Poiseuille conductance, included for reference.}
\label{Transverse_Usual}
\end{figure} 

\subsubsection{Approximation using an interpolation between Poiseuille flow and plug flow}\label{Poiseuille+Plug}
The long wavelength approximation allows the conductance to diverge as $l\to L$ but over-estimates the conductance of the channel because it neglects the viscous dissipation occurring in the region where the flow transition from plug to Poiseuille flow. To help separate the two type of flows apart, we consider a different setting of the problem by assuming that the no-slip region occupies the region $0 < z < L-l$ of pipe while slip region occupies the region $L-l \leq z < L$ of the pipe instead; in this sense, the transition region between the two flows occurring at $z = L-l$. Within its transition region, we expect the stream function to interpolate from Poiseuille; $\psi \approx c(r^4-2r^2+1)$ to plug flow; $\psi \approx c(1-r^2),$ or conversely. Having the same constant $c$ in the two expressions ensures that their overall fluxes are matched. Assume that the transitions occur over a length scale $\frac{1}{k}$. Then the most divergent term in the dissipation integral is $\left(\frac{\partial u_z}{\partial z}\right)^2 \approx \mathcal{O}(k^2[u_z]^2)$, where $[u_z]$ represents the size of discontinuity in $u_z$ from plug to Poiseuille flow. Thus, the total contribution to the dissipation on integrating in $z$ over an $\mathcal{O}(\frac{1}{k})$-sized transition region is $\mathcal{O}(k[u_z]^2)$. We modify our excess dissipation to include a penalty term to represent, heuristically, the dissipation within the transition region as
\begin{align}\label{penalty_dissipation}
    \overline{\mathcal{H}}[\mathbf{u}] &= \mathcal{H}[\mathbf{u}] + k\int_{0}^{1} r\left(4c(1-r^2)-2c\right)^2\:dr \notag \\
    &= 32(L-l)\pi c^2 -4c\pi + \frac{2k}{3}c^2
\end{align}

Where minimization is performed over the parameter $c$, and there is no Lagrange multiplier since our no-slip boundary conditions are satisfied automatically.

Notice that this method would fail when $k$ is large, as the dissipation from the transition region between the two flows would overshadow the whole excess dissipation. Thus, the selection of the value $k$ is essential to the approximation. In our case, we pick value the of $k$ so that the dissipation along the interface between the Poiseuille flow and plug flow is less than the energy dissipation along the channel by an order of magnitude when $l = 0$; we choose $k = 4.8\pi L$, and find that our results are not strongly affected by choosing $k$ within 50\% of this value. The resulting conductance approximation becomes 
\begin{equation} \label{eq:transition}
    Q_{\text{(HP + Plug)}}= \frac{6\pi L}{52.8L-48l} 
\end{equation}
Introducing the penalty terms to the excess dissipation successfully reduces our estimated conductance to match more closely to the numerically obtained values (Fig. \ref{penalty_result}), for a channel with denser slip patches (smaller values of $L/a$), as the fraction of wall that is slip is increased (i.e. for $l\approx L$).
\begin{figure}
    \centering
    \includegraphics[width = 0.9 \textwidth]{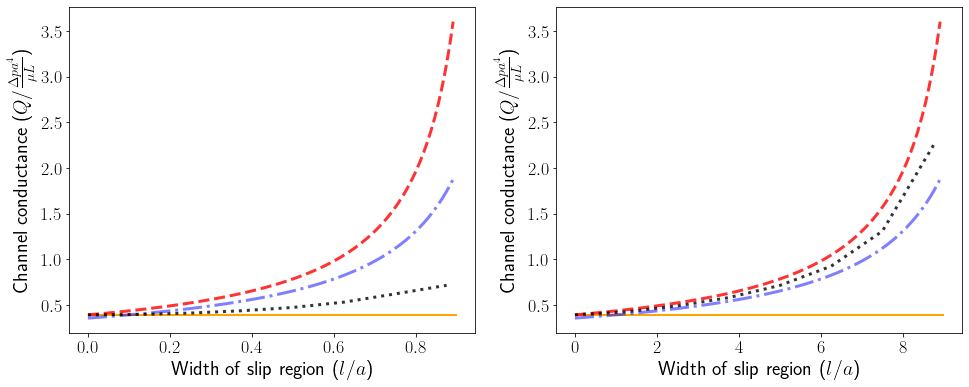} 
    \caption{Fluid conductance approximation for transverse slip patches using an additional penalty term for dissipation in the transition region between Poiseuille and plug flows approximations. The left panel is $\frac{L}{a}=1$ and the right panel is $\frac{L}{a} = 10$, and curves and colors follow Fig. \ref{Transverse_Usual}, except that the excess dissipation approximation (blue dot-dashed curve) has been replaced by Eqn. \ref{eq:transition}. The approximation is notably more accurate for more dense slip patches (comparing the left panel with Fig. \ref{Transverse_Usual})}
    \label{penalty_result}
\end{figure}  
\section{Conclusion}\label{Conclusion}

In this paper, we proposed an approximation method for the pressure-driven Stokes flow by minimizing the excess dissipation, and its variants, over a family of test functions. Starting with the excess dissipation, as formulated in \citet{keller1967extremum}, we impose a no-slip condition throughout the boundary of a channel that either has a triangular or a rectangular cross-section, and obtain a closed form approximate expression for the conductance, that empirically matches extremely closely to the numerically computed conductance when the ratio of the longest and shortest diameters of the channel is less than 2, and grows to the order of 15\% for channels with aspect ratio on the order of 4.

We have assumed in our analysis that the channel is prismatic -- that is, it has constant cross-section shape and area along its length. So long as incompressible test functions can be found, the same approximation principle could be used to calculate approximate flow fields for the many and variable channel geometries found in microfluidic devices. A good starting place for such a calculation would be to start with a long wave approximation, in which we assume a down stream velocity that at each $z$ approximates the unidirectional flow in a constant cross-section channel, and cross-channel velocities are introduced to maintain incompressibility.

In the second scheme, we consider a cylindrical pipe with two configurations, longitudinal and transverse, of banded regions of slip and no-slip, for which analytical solutions were previously derived by \citet{lauga2003effective}. In general, it is hard to come up with test functions that only vanish along one part of the boundary, while still allowing for non-zero velocities on the bands of slip; therefore, we modified our excess dissipation to include Lagrange multiplier terms that allow for an expanded set of test functions. However, the modified excess dissipation does not guarantee that the same minimization principle would still hold; rigorous arguments are needed to pinpoint the subspace of smooth vector fields which admit the Stokes flow as a minimum for the modified excess dissipation. We use this approximation to derive conductance laws that are, for both configurations of bands, in good agreement with the numerically computed conductances, so long as at least half of the channel boundary area carries a no-slip boundary condition. To resolve the failure of approximation when most boundary conditions are slip, we consider alternative test functions and penalty terms based on the transition region between the no-slip and slip boundary conditions. We note that although the analysis of circular cylindrical channels is convenient because of the existence of exact results on the conductance \citep{lauga2003effective}, our theory can be used for non-circular channels, or for channels in which the slip bands are aperiodic, or irregularly shaped.

Throughout, our approximation method has been compared with numerical computed conductances that can be obtained relatively readily using commercially available Finite Element Method Software. It might be contended that the existence of such software makes approximate solutions redundant. Nevertheless, approximate conductance laws gain utility from the fundamental limitations of analytic and numerical solutions of the flow through microchannels to expose how the flow depends quantitatively on channel shape, or the size and distribution of slip panels. By contrast, Eqn. \ref{eq:rectangular_conduct} gives a quick method for evaluating the conductance of rectangular channels of any aspect ratio, and could be used to estimate the pressure-flow relationships of microfluidic networks. Moreover, since approximate conductance laws are obtained here via a rigorous minimization principle, although there have been previous successful approximate conductance models such as \citet{bahrami2005pressure}, ours differs from them in that its accuracy can be made controllably precise.

\section*{Acknowledgement}
We thank Raymond Chu for helpful discussions, and Chuyuan Fu for correcting an error in our Eqn (6). We acknowledge financial support from the National Science Foundation, under award number DMS-2009317.
\bibliographystyle{jfm}
\bibliography{Project.bib}
\end{document}